\begin{document}

\title{The 2-adic complexity of Yu-Gong sequences with interleaved structure and optimal autocorrelation magnitude \thanks{This work is financially supported by the National Natural Science Foundation of China (No. 61902429), the Fundamental Research Funds for the Central Universities (No. 20CX05012A), and the Major Scientific and Technological Projects of CNPC under Grant ZD2019-183-008.\\
$\ast$ The corresponding author}}



\author{Yuhua Sun$^{\ast}$ \and Tongjiang Yan \and Qiuyan Wang}


\institute{Yuhua Sun, Tongjiang Yan \at
               College of Sciences,
China University of Petroleum,
Qingdao 266555,
Shandong, China;
\at
Provincial Key Laboratory of Applied Mathematics, Putian University, Putian,
Fujian 351100,
China;\\
Qiuyan Wang \at
School of Computer Science and Technology, \\
Tiangong University, Tianjin 300387,China
\\
          }
\date{Received: date / Accepted: date}

\maketitle

\begin{abstract}
In 2008, a class of binary sequences of period $N=4(2^k-1)(2^k+1)$ with optimal autocorrelation magnitude has been presented by Yu and Gong based on an $m$-sequence, the perfect sequence $(0,1,1,1)$ of period $4$ and interleaving technique. In this paper, we study the 2-adic complexity of these sequences. Our result shows that it is larger than $N-2\lceil\mathrm{log}_2N\rceil+4$ (which is far larger than $N/2$) and could attain the maximum value $N$ if suitable parameters are chosen, i.e., the 2-adic complexity of this class of interleaved sequences is large enough to resist the Rational Approximation Algorithm.
\keywords{$m$-sequence \and  interleaved sequence \and  optimal autocorrelation magnitude \and  2-adic complexity}


\end{abstract}

\section{Introduction}
\label{section 1}

Since the interleaved structure of sequences was introduced by Gong in \cite{Gong-1}, several classes of binary sequences with this form have been constructed and were proved to have so many good pseudo-random properties, such as low autocorrelation, large linear complexity. For example, in 2010, Tang and Gong constructed three classes of sequences with optimal autocorrelation value/magnitude using Legendre sequences, twin-prime sequences and a generalized GMW sequence, respectively \cite{Tang-Gong}, which were showed by Li and Tang to have large linear complexity \cite{LiNian-1}. In quick succession, Tang and Ding presented two more general constructions which include constructions in \cite{Tang-Gong} as special cases and gave more sequences with optimal autocorrelation and large linear complexity \cite{Tang-Ding}. Later, Yan et al. also gave a generalized version for the constructions in \cite{Tang-Gong} and put forward a sufficient and necessary condition for an interleaved sequence to have optimal autocorrelation \cite{Yan-Information S}. What's more exciting is that these sequences have also been proved to have large
2-adic complexity by Xiong et al. \cite{Xiong-1,Xiong-2} and Hu \cite{Hu-1} using different methods respectively. Moreover, Su et al. constructed another class of sequences with optimal autocorrelation magnitude combining interleaved structure and Ding-Helleseth-Lam sequences \cite{SuWei-1} and these sequences have also been shown to have large linear complexity by Fan \cite{Fan-1} and large 2-adic complexity by Sun et al. \cite{SunYuhua-1} and Yang et al. \cite{Yang-1}.

Note that each of the above mentioned sequences can be described as an interleaved form $s=I(s_{1},s_2,s_{3},s_{4})$, i.e., the sequence $s$ is obtained by concatenating the successive rows of the matrix $I(s_{1},s_2,s_{3},s_{4})$, in which each column is a periodic sequence $s_i$, $1\leq i\leq4$. In fact, Yu and Gong also presented another description method of an interleaved structure by an indicator sequence (see Construction 1) \cite{Yu-Gong-1}. Using this description, Yu and Gong represented an ADS (almost difference set) sequence of period $4v$ in \cite{Arasu} as a $v\times4$ interleaved structure and a product sequence of period $4v$ in \cite{Luke} as a $4\times v$ interleaved structure respectively, which provides us a new understanding for the two sequence structures. Not only that, they also discovered another new classes of sequences with optimal autocorrelation magnitude and large linear complexity using binary $m$-sequences as the indictor sequences, which we call Yu-Gong sequences. However, the 2-adic complexity of this class of sequences has not been studied yet as far as we know.

In this paper, using the method of Hu \cite{Hu-1}, we investigate the 2-adic complexity of a Yu-Gong sequence with an $m$-sequence as its indicator sequence, which is proved to be lower bounded by $N-2\lceil\mathrm{log}_2N\rceil+4$ ($\gg \frac{N}{2}$) and could attain the maximum value $N$ if suitable parameters are chosen, where $N$ is the period of the sequence.

The rest of the paper is organized as follows. Some notations and definitions are introduced in section 2. We describe the generalized construction and the definition of a Yu-Gong sequence in Section 3. In Section 4, we point out a very interesting and useful law of the autocorrelation values of a Yu-Gong sequence. Using the method of Hu and the law of the autocorrelation distribution of a Yu-Gong sequence, we derive a lower bound on the 2-adic complexity of this sequence in Section 5.

\section{Preliminaries}
The following symbols will be used throughout the whole paper.
\begin{itemize}
       \item [(1)] $\mathbb{Z}_{N}$ is a ring of integers modulo $N$ and $\mathbb{Z}_{N}^{+}=\{t\in \mathbb{Z}_{N}|t\neq0\}$.
       \item [(2)] $\mathbb{F}_q$ is a finite field with $q$ elements.
       \item [(3)] For positive integers $n$ and $m$ satisfying $m\mid n$, the trace function $Tr_m^n(x)$ from $\mathbb{F}_{2^n}$ to $\mathbb{F}_{2^m}$ is defined by
       $$Tr_m^n(x)=x+x^{2^{m}}+\cdots+x^{2^{m(\frac{n}{m}-1)}},\ \ x\in  \mathbb{F}_{2^n}.$$
\end{itemize}

Let $\mathbf{s}=(s_{0},s_{1},\cdots,s_{N-1})$ be a binary sequence of period $N$. Then the autocorrelation of $\mathbf{s}$
is given by
\begin{equation}
AC_{\mathbf{s}}(\tau)=\sum_{t=0}^{N-1}(-1)^{s_{t}+s_{t+\tau}},\ \ \ \ 0\leq\tau\leq N-1,\nonumber
\end{equation}
where $\tau$ is called a phase shift of the sequence $\mathbf{s}$ and $t+\tau$ is computed modulo $N$. The sequence $\mathbf{s}$ is called to have optimal autocorrelation if $AC_\mathbf{s}(\tau)$ satisfies the following:
\begin{itemize}
  \item [(1)] $AC_{\mathbf{s}}(\tau)\in\{N,1,-3\}$ for $N\equiv1\ (\mathrm{mod}\ 4)$ or
  \item [(2)] $AC_{\mathbf{s}}(\tau)\in\{N,2,-2\}$ for $N\equiv2\ (\mathrm{mod}\ 4)$ or
  \item [(3)] $AC_{\mathbf{s}}(\tau)\in\{N,-1\}$ for $N\equiv3\ (\mathrm{mod}\ 4)$ or
  \item [(4)] $AC_{\mathbf{s}}(\tau)\in\{N,0,-4\}$ or $\{N,0,4\}$ for $N\equiv0\ (\mathrm{mod}\ 4)$
\end{itemize}
for all $\tau$'s. Specially, the case (3) is called to have ideal two-level autocorrelation. Additionally, for $N\equiv0\ (\mathrm{mod}\ 4)$ and all $\tau$'s, it is called to have perfect autocorrelation if $AC_{\mathbf{s}}(\tau)\in\{N,0\}$ and optimal autocorrelation magnitude if $AC_{\mathbf{s}}(\tau)\in\{N,0,4,-4\}$. So far, the sequence $(0,1,1,1)$ of period 4 is a uniquely known binary sequence with perfect autocorrelation in the sense of cyclic equivalence. Hence, it is often used to construct new sequences with good correlation and Yu-Gong sequence discussed in this paper is one of the applications.

Denote $S(x)=\sum\limits_{i=0}^{N-1}s(i)x^i\in \mathbb{Z} [x]$ and suppose
\begin{equation}
\frac{S(2)}{2^N-1}=\frac{\sum\limits_{i=0}^{N-1}s(i)2^i}{2^N-1}=\frac{e}{f},\ 0\leq e\leq f,\ \mathrm{gcd}(e,f)=1.\nonumber
\end{equation}
Then the integer $\lfloor\mathrm{log}_2(f+1)\rfloor$ is called the 2-adic complexity of the sequence $s$ and is denoted as $\Phi_{2}(s)$, i.e.,
\begin{equation}
\Phi_{2}(s)=\left\lfloor\mathrm{log}_2\left(\frac{2^N-1}{\mathrm{gcd}\left(2^N-1,S(2)\right)}+1\right)\right\rfloor,\label{2-adic calculation}
\end{equation}
where $\lfloor z\rfloor$ is the largest integer that is less than or equal to $z$.

It is well known that the 2-adic complexity of a binary sequence $s$ with period $N$ should be larger than $\frac{N}{2}$ to resist the Rational Approximation Algorithm by Klapper et al. \cite{Klapper-1}.

\section{The interleaved structures of a binary $m$-sequence and Yu-Gong sequence}\label{section 3}

$\mathbf{Construction\ 1}$ \cite{Yu-Gong-1}: Let each column of a $v\times w$ matrix $\mathbf{C}=(C_{i,j})$ be given by $C(i,j)=c_j(i)$ and $\mathbf{c}_{j}=(c_{j}(0),c_{j}(1),\cdots,c_{j}(v-1))$, $0\leq j\leq w-1$, i.e., the matrix $\mathbf{C}$ can be expressed as
\begin{equation}
\mathbf{C}=\left(
\begin{array}{cccc}
c_{0}(0) & c_{1}(0) &\cdots &c_{w-1}(0)\\
c_{0}(1) & c_{1}(1) &\cdots &c_{w-1}(1)\\
\vdots & \vdots &\ddots &\vdots\\
c_{0}(v-1) & c_{1}(v-1) &\cdots &c_{w-1}(v-1)
\end{array}
\right).\nonumber
\end{equation}
If each sequence $\mathbf{c}_{j}$ is either a cyclic shift of a binary sequence $\mathbf{a}=(a_{0},a_1,\cdots,a_{v-1})$ of period $v$ or a zero sequence and the sequence $\mathbf{u}=\{u_{t}\}$ is obtained by concatenating the successive rows of the above matrix $\mathbf{C}$, then $\mathbf{u}$ is called a $(v,w)$ interleaved sequence. By the definition, $\mathbf{c}_{j}=L^{e_{j}}(\mathbf{a})$, $0\leq j\leq w-1$, here $L^{e_{j}}$ is a cyclic $e_{j}$ left shift operation, $e_{j}\in \mathbb{Z}_{v}$ or $e_{j}=\infty$ if $\mathbf{c}_{j}$ is a zero sequence. Adding a binary sequence $\mathbf{b}=(b_0,b_1,\cdots,b_{w-1})$ of period $w$ to the sequence $\mathbf{u}$, a new sequence $s$ will be produced, which is denoted $\mathbf{s}:=I(\mathbf{a},\mathbf{e})+\mathbf{b}$, where $\mathbf{u}:=I(\mathbf{a},\mathbf{e})$ and $e=(e_0,e_1,\cdots,e_{w-1})$, and it still preserves the $(v,w)$ interleaved structure. We call $\mathbf{a}$, $\mathbf{e}$ and $\mathbf{b}$ the base, the shift and the indicator sequences of $\mathbf{s}$, respectively.

For a positive integer $k>1$, let $\mathbf{b}$ be a binary $m$-sequence of period $2^{2k}-1$, i.e., $w=2^{2k}-1$, where $b_t=Tr_1^{2k}(\alpha^t)$ and $\alpha$ is a primitive element of the finite field $\mathbb{F}_{2^{2k}}$, $0\leq t\leq 2^{2k}-2$. It is well known that $\mathbf{b}$ can be expressed as a $(2^k-1,2^k+1)$ interleaved sequence \cite{Gong-1}, i.e., $\mathbf{b}=I(\mathbf{a}^{\prime},\mathbf{e}^{\prime})$, where the base sequence $\mathbf{a}^{\prime}=(a_{0}^{\prime},a_1^{\prime},\cdots,a_{2^k-2}^{\prime})$ is a binary $m$-sequence of period $2^k-1$ defined by $a_i^{\prime}=Tr_1^{k}(\beta^{i})$, $0\leq i\leq 2^k-2$, $\beta=\alpha^{2^k+1}$ is a primitive element of $\mathbb{F}_{2^{k}}$, and the shift sequence $\mathbf{e^{\prime}}=(e_0^{\prime},e_1^{\prime},\cdots,e_{2^k}^{\prime})$ is given by $e_{0}^{\prime}=\infty$ and $\beta^{e_{j}^{\prime}}=Tr_k^{2k}(\alpha^{j})$ for $1\leq j\leq 2^k$.

A Yu-Gong sequence $\mathbf{s}=I(\mathbf{a},\mathbf{e})+\mathbf{b}$ of period $N=4(2^{2k}-1)$, is given by a $4\times(2^{2k}-1)$ interleaved structure, where:
\begin{itemize}
\item[(1)] $\mathbf{a}=(0,1,1,1)$ is the perfect binary sequence;
\item[(2)] $\mathbf{b}$ is the binary $m$-sequence defined as above;
\item[(3)] $\mathbf{e}$ is a sequence over $\mathbb{Z}_{4}$ represented as a $(2^k-1)\times(2^k+1)$ interleaved structure by a matrix $\mathbf{E}=(e_{i,j})$, where
\begin{eqnarray}
e_{i,j}=\left\{
\begin{array}{ll}
3i+\delta\ (\mathrm{mod}\ 4),\ \ \ \mathrm{if}\ j=0;\\
3(i+j)\ (\mathrm{mod}\ 4),\ \mathrm{if}\ 1\leq j\leq 2^k
\end{array}
\right.\label{shift-sequence}
\end{eqnarray}
for $0\leq i\leq 2^k-2$ and $\delta=1$ or $-1$.
\end{itemize}
\section{An interesting and useful law of the autocorrelation distribution of the Yu-Gong sequence}\label{section 2}

In order to analyze the 2-adic complexity of the Yu-Gong sequence $\mathbf{s}$, we need the exact order according to the value $\tau$ of the autocorrelation value $AC_{\mathbf{s}}(\tau)$ of $\mathbf{s}$, which can be given by the following two results.
\begin{theorem} \label{Yu-Gong}\cite{Yu-Gong-1}
Let $\mathbf{s}=I(\mathbf{a},\mathbf{e})+\mathbf{b}$ be a Yu-Gong sequence of period $N=4(2^{2k}-1)$, $k>1$. Then, it has the four-valued optimal autocorrelation of $AC_{\mathbf{s}}(\tau)\in\{N,\ 0,\ \pm4\}$ for any $\tau$. Precisely, its complete autocorrelation is given by
\begin{eqnarray}
AC_{\mathbf{s}}(\tau)=\left\{
\begin{array}{llllllll}
N,\ \ \ \ \ \ \ \ \ \ \ \mathrm{if}\ \tau=0\\
0,\ \ \ \ \ \ \ \ \ \ \ \ \mathrm{if}\ (\tau\neq0\ and\ x=0)\\
\ \ \ \ \ \ \ \ \ \ \ \ \ \ \mathrm{or}\ (x,y,v)=(\sigma,0,v)\\
\ \ \ \ \ \ \ \ \ \ \ \ \ \ \mathrm{or}\ (y,v)=(\psi,2),\\
-4,\ \ \ \ \ \ \ \ \ \mathrm{if}\ (x,y,v)=(\sigma,0,0)\\
\ \ \ \ \ \ \ \ \ \ \ \ \ \ \mathrm{or}\ (y,v)=(\psi,1)\\
\ \ \ \ \ \ \ \ \ \ \ \ \ \ \mathrm{or}\ (y,v)=(\psi,3),\\
+4,\ \ \ \ \ \ \ \ \ \mathrm{if}\ (y,v)=(\psi,0),
\end{array}
\right.\label{autocorrelation}
\end{eqnarray}
where $x\equiv \tau\ (\mathrm{mod}\ 2^{2k}-1),\ y\equiv \tau\ (\mathrm{mod}\ 2^k+1)$, and $v\equiv \tau\ (\mathrm{mod}\ 4)$. Also, $\sigma\in \mathbb{Z}_{2^{2k}-1}^{+},\ \psi\in \mathbb{Z}_{2^k+1}^{+}$, $v\in \mathbb{Z}_{4}^{+}$, and $\mathbb{Z}_{h}^{+}=\{1,2,\cdots,h-1\}$ for a positive integer $h$.
\end{theorem}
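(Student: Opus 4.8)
First I would put the sequence in closed form and reduce the autocorrelation to one on the indicator $m$-sequence $\mathbf{b}$, exploiting the perfect autocorrelation of the base $\mathbf{a}=(0,1,1,1)$. With the convention that concatenating the rows of the $4\times(2^{2k}-1)$ matrix sends its $(i,j)$ entry to position $i(2^{2k}-1)+j$, one has $s_{i(2^{2k}-1)+j}=a_{(i+e_j)\bmod 4}+b_j$, where $e_j$ is the entry of (\ref{shift-sequence}) at the position of $\mathbf{E}$ corresponding to $j$. Fix $\tau$, and for each column $j$ write $j'=(j+\tau)\bmod(2^{2k}-1)$ and let $\mu_j\in\mathbb{Z}_4$ be $(j+\tau-j')/(2^{2k}-1)$ reduced modulo $4$; tracing the position $i(2^{2k}-1)+j+\tau$ through the interleaved indexing gives $s_{t+\tau}=a_{(i+\mu_j+e_{j'})\bmod 4}+b_{j'}$, so
\begin{equation}
AC_{\mathbf{s}}(\tau)=\sum_{j=0}^{2^{2k}-2}(-1)^{b_j+b_{j'}}\sum_{i=0}^{3}(-1)^{a_{(i+e_j)\bmod 4}+a_{(i+\mu_j+e_{j'})\bmod 4}}=\sum_{j=0}^{2^{2k}-2}(-1)^{b_j+b_{j'}}AC_{\mathbf{a}}((\mu_j+e_{j'}-e_j)\bmod 4).\nonumber
\end{equation}
Since $\mathbf{a}=(0,1,1,1)$ has perfect autocorrelation, $AC_{\mathbf{a}}$ equals $4$ at $0$ and $0$ elsewhere, hence
\begin{equation}
AC_{\mathbf{s}}(\tau)=4\sum_{j\in J}(-1)^{b_j+b_{j'}},\qquad J=\{\,j:\ e_{j'}-e_j\equiv-\mu_j\ (\mathrm{mod}\ 4)\,\}.\nonumber
\end{equation}

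The core step is to identify $J$. Writing $2^{2k}-1=(2^k-1)(2^k+1)$ and using that $2^{2k}-1\equiv-1$ and $2^k+1\equiv1\ (\mathrm{mod}\ 4)$ for $k>1$, I would track the carry of $j\bmod(2^k+1)$ under the shift together with the outer carry $\mu_j$. The key (and slightly miraculous) computation is that for every column $j$ such that both $j\bmod(2^k+1)$ and $j'\bmod(2^k+1)$ are nonzero, the explicit shift sequence (\ref{shift-sequence}) yields
\begin{equation}
e_{j'}-e_j\equiv-(v+\mu_j)\ (\mathrm{mod}\ 4),\nonumber
\end{equation}
independently of $j$; hence such a column lies in $J$ if and only if $v\equiv0\ (\mathrm{mod}\ 4)$. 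For the $2^k-1$ columns with $j\equiv0\ (\mathrm{mod}\ 2^k+1)$ — precisely those on which $\mathbf{b}$ vanishes, where the parameter $\delta$ enters the formula — the same computation gives the shifted condition $v\equiv-\delta$, and for the $2^k-1$ columns with $j'\equiv0\ (\mathrm{mod}\ 2^k+1)$ it gives $v\equiv\delta$. When $y\neq0$ these two exceptional families are disjoint from each other and from the generic columns; when $y=0$ they coincide with the single family $j\equiv0\ (\mathrm{mod}\ 2^k+1)$, whose condition is again $v\equiv0$. Thus, up to the $2(2^k-1)$ exceptional columns, $J$ is the full index set when $v\equiv0$ and empty otherwise, and the membership of the exceptional columns is pinned down by $v$ and $\delta$.

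It then remains to evaluate $\sum_{j\in J}(-1)^{b_j+b_{j'}}$ in each regime, using only that $\mathbf{b}$ is a binary $m$-sequence of period $2^{2k}-1$ (hence balanced with $AC_{\mathbf{b}}(x)=-1$ for $x\not\equiv0$) and that, in its $(2^k-1,2^k+1)$ interleaved form $\mathbf{b}=I(\mathbf{a}',\mathbf{e}')$, the column indexed $j\equiv0\ (\mathrm{mod}\ 2^k+1)$ is the zero column, with $\mathbf{a}'$ a balanced $m$-sequence of period $2^k-1$. If $\tau=0$ then $J$ is everything and $AC_{\mathbf{s}}(0)=4(2^{2k}-1)=N$. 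If $\tau\neq0$ and $x=0$ then $y=0$ and $v\neq0$, so $J=\emptyset$ and $AC_{\mathbf{s}}(\tau)=0$. If $y=0$, $x\neq0$ and $v\equiv0$, then $J$ is everything and $AC_{\mathbf{s}}(\tau)=4\,AC_{\mathbf{b}}(x)=-4$. If $y\neq0$: for $v=2$ neither exceptional family is in $J$, so $J=\emptyset$ and $AC_{\mathbf{s}}(\tau)=0$; for $v\in\{1,3\}$ exactly one exceptional family is in $J$, and since those $2^k-1$ column pairs match a zero entry of $\mathbf{b}$ against a full period of $\mathbf{a}'$ as the free row-parameter runs over $\mathbb{Z}_{2^k-1}$, that sum equals $\sum_{m\in\mathbb{Z}_{2^k-1}}(-1)^{a'_m}=-1$ and $AC_{\mathbf{s}}(\tau)=-4$; for $v=0$, $J$ is the generic part and $\sum_{j\ \mathrm{generic}}(-1)^{b_j+b_{j'}}=AC_{\mathbf{b}}(x)-(-1)-(-1)=1$, so $AC_{\mathbf{s}}(\tau)=4$. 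Translating these into $x,y,v$ reproduces (\ref{autocorrelation}) exactly; the roles of the two exceptional families are swapped when $\delta=1$ versus $\delta=-1$, which is why the values do not depend on $\delta$.

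The main obstacle is the middle step: the modulo-$4$ carry bookkeeping needed to show that $e_{j'}-e_j$ collapses to the $j$-independent quantity $-(v+\mu_j)$ on the generic columns, and to locate the $\pm\delta$ shift on the two exceptional column families. Once that congruence is in hand, the last step is a short manipulation with standard $m$-sequence facts, and the (initially surprising) appearance of $\pm4$ rather than only $0$ is seen to be exactly the $\pm1$ boundary contribution of the $2^k-1$ zero-columns of $\mathbf{b}$, with sign governed by $v$ and $\delta$.
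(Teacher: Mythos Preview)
The paper does not give its own proof of this theorem: it is quoted verbatim from \cite{Yu-Gong-1} and used as a black box. So there is nothing in the paper to compare your argument against, and your proposal is in fact a self-contained derivation that the paper never attempts.

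That said, your argument is correct and worth recording. The reduction
\[
AC_{\mathbf{s}}(\tau)=4\sum_{j\in J}(-1)^{b_j+b_{j'}},\qquad J=\{\,j:\ e_{j'}-e_j\equiv-\mu_j\ (\mathrm{mod}\ 4)\,\},
\]
is exactly right, and the ``miraculous'' collapse you flag as the main obstacle is in fact immediate once one notices that $2^k+1\equiv1\pmod 4$ for $k\ge 2$: writing $j=I(2^k+1)+J$ gives $I+J\equiv j\pmod 4$, so on the generic columns $e_j\equiv 3j\equiv -j\pmod 4$, whence $e_{j'}-e_j\equiv-(j'-j)\equiv-(v+\mu_j)\pmod 4$ because $2^{2k}-1\equiv-1\pmod 4$. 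The same one-line computation with the extra $\delta$ produces the conditions $v\equiv\pm\delta$ on the two exceptional column families, and $v\equiv0$ when they coincide ($y=0$). Your endgame --- evaluating the three regimes using $AC_{\mathbf{b}}(x)=-1$ and the balancedness of $\mathbf{a}'$ on the zero column of $\mathbf{b}$ --- is also correct; in particular the value $+4$ is exactly $4\bigl(AC_{\mathbf{b}}(x)-(-1)-(-1)\bigr)$, and the symmetry $\delta\leftrightarrow-\delta$ merely swaps the two exceptional families, so the final values are indeed independent of $\delta$. You might streamline the write-up by making the observation $I+J\equiv j\pmod 4$ explicit at the outset; after that the carry bookkeeping disappears entirely.
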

Now, the order of the autocorrelation value of $\mathbf{s}$ can be described as follows.

\begin{corollary}\label{main-lemma}
Let the symbols be the same as those in Theorem \ref{Yu-Gong}. Then the following results hold:
\begin{itemize}
\item[(1)]Suppose $1\leq\tau_1,\tau_2 \leq N-1$ and $\tau_1\equiv \tau_2\ (\mathrm{mod}\ 4(2^k+1))$. Then the autocorrelation function of the sequence $\mathbf{s}$ satisfies $AC_{\mathbf{s}}(\tau_1)=AC_{\mathbf{s}}(\tau_2)$. Particularly, $AC_{\mathbf{s}}(4(2^k+1)i)=-4$, $i=1,2,\cdots,2^k-2$;
\item[(2)]For $1\leq\tau\leq 4(2^k+1)$, if we divide the set of the autocorrelation values $\{AC_{\mathbf{s}}(\tau)\}_{\tau=1}^{4(2^k+1)}$ of $\mathbf{s}$ into $2^k+1$ subsets $S_j$'s, $j=1,2,\cdots,2^k+1$, according to the order of $\tau$ and each subset contains four elements, i.e., $S_j=\{AC_{\mathbf{s}}(4(j-1)+1),AC_{\mathbf{s}}(4(j-1)+2),AC_{\mathbf{s}}(4(j-1)+3),AC_{\mathbf{s}}(4(j-1)+4)\}$, Then
\begin{eqnarray}
S_1=S_2=\cdots=S_{2^{k-2}}=\{-4,0,-4,4\};\label{Auto-distribution-1}\\
S_{2^{k-2}+1}=\{0,0,-4,4\};\label{Auto-distribution-2}\\
S_{2^{k-2}+2}=S_{2^{k-2}+3}=\cdots=S_{3\times2^{k-2}}=\{-4,0,-4,4\};\label{Auto-distribution-3}\\
S_{3\times2^{k-2}+1}=\{-4,0,0,4\};\label{Auto-distribution-4}\\
S_{3\times2^{k-2}+2}=S_{3\times2^{k-2}+3}=\cdots=S_{2^{k}}=\{-4,0,-4,4\};\label{Auto-distribution-5}\\
S_{2^{k}+1}=\{-4,0,-4,-4\}.\label{Auto-distribution-6}
\end{eqnarray}
\end{itemize}
It should be pointed out that there are $2^{k-2}$ sets in Eq. (\ref{Auto-distribution-1}), $2^{k-1}-1$ sets in Eq. (\ref{Auto-distribution-3}), and $2^{k-2}-1$ sets in Eq.(\ref{Auto-distribution-5}).
\end{corollary}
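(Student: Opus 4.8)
The plan is to derive Corollary \ref{main-lemma} purely combinatorially from the explicit autocorrelation formula in Theorem \ref{Yu-Gong}, using the Chinese Remainder Theorem to translate conditions on $\tau$ into conditions on the triple $(x,y,v)$ where $x\equiv\tau\pmod{2^{2k}-1}$, $y\equiv\tau\pmod{2^k+1}$, and $v\equiv\tau\pmod 4$. Note first that $N=4(2^{2k}-1)=4(2^k-1)(2^k+1)$, and since $2^k-1$, $2^k+1$, and $4$ are pairwise coprime (both $2^k\pm1$ are odd and differ by $2$), the map $\tau\mapsto(x,y,v)$ is a bijection $\mathbb{Z}_N\to\mathbb{Z}_{2^{2k}-1}\times\mathbb{Z}_{2^k+1}\times\mathbb{Z}_4$. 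For part (1), observe that $4(2^k+1)$ is exactly the modulus that fixes the $y$-coordinate and the $v$-coordinate: if $\tau_1\equiv\tau_2\pmod{4(2^k+1)}$ then $y_1=y_2$ and $v_1=v_2$, and moreover $\tau_1=0\iff\tau_2=0$ cannot be split apart here since $0<\tau_i<N$. Reading Theorem \ref{Yu-Gong}, the value $AC_{\mathbf s}(\tau)$ for $\tau\neq0$ depends only on whether $x=0$ and on the pair $(y,v)$: if $y\neq0$ the value is determined entirely by $(y,v)$ (the four cases $v=0,1,2,3$ give $+4,-4,0,-4$), while if $y=0$ the value is $0$ when $x\neq0$ (the $(\sigma,0,v)$ branch) and, when $x=0$ as well, one is back to a subcase where $v$ decides between $0$ and $-4$. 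In every branch the outcome is a function of $(y,v)$ together with at most the single bit "$x=0$"; but if $y_1=y_2$ and $v_1=v_2$ and $\tau_1\equiv\tau_2\pmod{4(2^k+1)}$, then $x_1\equiv x_2\pmod{\gcd(2^{2k}-1,4(2^k+1))}=\gcd(2^{2k}-1,2^k+1)=2^k+1$, which is not enough to force $x_1=x_2$; however $x_i=0$ forces $\tau_i$ divisible by $2^{2k}-1$, i.e.\ $\tau_i\in\{0,2^{2k}-1,2(2^{2k}-1),3(2^{2k}-1)\}$, and one checks directly that among these four values no two are congruent mod $4(2^k+1)$ except trivially, so the bit "$x=0$" is in fact also determined by the residue class mod $4(2^k+1)$ once we know $y\neq0$ versus $y=0$. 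That settles $AC_{\mathbf s}(\tau_1)=AC_{\mathbf s}(\tau_2)$; the particular claim $AC_{\mathbf s}(4(2^k+1)i)=-4$ for $1\le i\le 2^k-2$ then follows by plugging $\tau=4(2^k+1)i$: here $v=0$, $y=0$, and $x\equiv 4(2^k+1)i\pmod{2^{2k}-1}$, which is nonzero for $i$ in that range (since $0<4(2^k+1)i<4(2^{2k}-1)=N$ and $2^{2k}-1\nmid 4(2^k+1)i$ as $\gcd(2^{2k}-1,4(2^k+1))=2^k+1<2^{2k}-1$), so we are in the $(\sigma,0,0)$ branch giving $-4$.

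For part (2), the strategy is to compute, for each $j=1,\dots,2^k+1$ and each $r\in\{1,2,3,4\}$, the value $AC_{\mathbf s}(4(j-1)+r)$ by evaluating $(x,y,v)$ at $\tau=4(j-1)+r$. Here $v\equiv r\pmod 4$, so as $r$ runs over $1,2,3,4$ the coordinate $v$ runs over $1,2,3,0$; and $y\equiv 4(j-1)+r\pmod{2^k+1}$. The key observation is that $4(j-1)+r$ ranges over $1,\dots,4(2^k+1)$, and since $\gcd(4,2^k+1)=1$ the value $y$ as a function of the pair $(j,r)$ takes each residue in $\mathbb{Z}_{2^k+1}$ in a controlled pattern; in particular, within a fixed block $S_j$ the four values $\tau=4(j-1)+1,\dots,4(j-1)+4$ all have $y$ lying in the set $\{4(j-1)+1,\dots,4(j-1)+4\}\bmod(2^k+1)$, which are four consecutive residues (wrapping once, at most). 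So generically all four $y$-values in a block are nonzero, and then $S_j=\{AC(v{=}1),AC(v{=}2),AC(v{=}3),AC(v{=}0)\}=\{-4,0,-4,+4\}$ by the $(y,v)=(\psi,\cdot)$ rows of Theorem \ref{Yu-Gong} — this is Eq.\ \eqref{Auto-distribution-1}, \eqref{Auto-distribution-3}, \eqref{Auto-distribution-5}. The exceptional blocks are precisely those in which one of the four consecutive residues $4(j-1)+r\bmod(2^k+1)$ equals $0$; solving $4(j-1)+r\equiv0\pmod{2^k+1}$ for $1\le j\le2^k+1$, $1\le r\le4$ pins down the special indices, and because $4\cdot2^{k-2}=2^k$ and $4\cdot3\cdot2^{k-2}=3\cdot2^k\equiv -3\pmod{2^k+1}\cdot$(appropriately) one finds the block boundaries land at $j=2^{k-2}+1$, $j=3\cdot2^{k-2}+1$, and $j=2^k+1$, matching the three exceptional lines \eqref{Auto-distribution-2}, \eqref{Auto-distribution-4}, \eqref{Auto-distribution-6}. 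In each such block, exactly one entry has $y=0$; that entry's value is then governed by the "$y=0$" branch of Theorem \ref{Yu-Gong}, which gives $0$ if $x\neq0$ and $-4$ if $x=0$ — and one checks whether the corresponding $\tau$ is a multiple of $2^{2k}-1$ to decide. For $j=2^k+1$, the relevant $\tau$ is near $4(2^k+1)$ and in fact $\tau=4(2^k+1)-something$ hits $x$-values including possibly $x=0$ at $\tau=0\equiv N$, accounting for why $S_{2^k+1}$ has an extra $-4$ in place of a $0$; for the middle two exceptional blocks the special $\tau$ has $x\neq0$, giving a $0$ where a $-4$ would otherwise sit, hence $\{0,0,-4,4\}$ and $\{-4,0,0,4\}$ respectively. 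Finally the counting remark — $2^{k-2}$ blocks of the first type, $2^{k-1}-1$ of the third, $2^{k-2}-1$ of the fifth — is just bookkeeping: the exceptional blocks sit at positions $2^{k-2}+1$, $3\cdot2^{k-2}+1$, $2^k+1$, partitioning $\{1,\dots,2^k+1\}$ into runs of lengths $2^{k-2}$, $2^{k-1}-1$, $2^{k-2}-1$ (plus the three singleton exceptional blocks, and noting $2^{k-2}+(2^{k-1}-1)+(2^{k-2}-1)+3=2^k+1$).

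The main obstacle I anticipate is the careful case analysis at the exceptional blocks, specifically determining in each of the four slots of a special block whether the "$x=0$" condition holds, i.e.\ tracking the $x$-coordinate $x\equiv\tau\pmod{2^{2k}-1}$ alongside $y$ and $v$. Since $\tau$ ranges only up to $4(2^k+1)$ in this analysis while $2^{2k}-1$ is much larger, we have $x=\tau$ outright (no reduction) for all $\tau\le 4(2^k+1)<2^{2k}-1$, so $x=0$ never happens for $1\le\tau\le4(2^k+1)$ — wait, that means the $(\sigma,0,0)$ and $x=0$ subcases are actually vacuous on this range, EXCEPT that part (1) already extended the values periodically with period $4(2^k+1)$, so the $-4$'s coming from $x=0$ at large $\tau$ (multiples of $2^{2k}-1$) get "folded back" via the congruence, and one must be careful that Corollary part (2) is consistent with this folding. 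Reconciling the small-$\tau$ direct computation with the period-$4(2^k+1)$ structure from part (1) — in particular making sure the single $-4$-versus-$0$ discrepancies in the exceptional blocks are located in the slot dictated by both viewpoints simultaneously — is the delicate point; everything else is a routine, if lengthy, verification using the Chinese Remainder Theorem and the explicit table in Theorem \ref{Yu-Gong}.
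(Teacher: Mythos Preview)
Your approach --- deriving everything from Theorem~\ref{Yu-Gong} by tracking the triple $(x,y,v)$ via CRT --- is exactly the paper's approach; the paper's proof is two sentences (congruence mod $4(2^k+1)$ fixes $(y,v)$, hence fixes $AC_{\mathbf s}(\tau)$; the rest is direct verification).

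There is, however, a genuine error in your part~(1) argument. You assert that ``the bit `$x=0$' is in fact also determined by the residue class mod $4(2^k+1)$'', and your justification (that the four multiples of $2^{2k}-1$ are pairwise incongruent mod $4(2^k+1)$) is a non sequitur. The claim itself is false: take $\tau_1=2^{2k}-1$ (so $x_1=0$) and $\tau_2=\tau_1+4(2^k+1)$; then $\tau_1\equiv\tau_2\pmod{4(2^k+1)}$ but $x_2\equiv 4(2^k+1)\not\equiv0\pmod{2^{2k}-1}$ for $k\ge3$. What actually rescues the argument is the observation you missed: for $\tau\neq0$, the autocorrelation $AC_{\mathbf s}(\tau)$ is a function of $(y,v)$ \emph{alone}, and $x$ is irrelevant. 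Indeed, the branch ``$\tau\neq0$ and $x=0$'' in Theorem~\ref{Yu-Gong} forces $y=0$ (since $2^k+1\mid 2^{2k}-1$) and $v\neq0$ (else $\tau=0$ by CRT), and on that region the $(\sigma,0,v)$-branch already returns $0$ regardless of $x$; likewise the $(\sigma,0,0)$-branch nominally requires $x\neq0$, but $y=v=0$ with $\tau\neq0$ forces $x\neq0$ automatically. Once you record this, part~(1) is immediate and your entire ``main obstacle'' paragraph about folding $x=0$ back through the period evaporates.

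For part~(2) your plan is then exactly right and purely mechanical: with $x$ out of the picture, one just solves $4(j-1)+r\equiv0\pmod{2^k+1}$ for $r\in\{1,2,3,4\}$ to locate the blocks where some $y=0$. Using $4^{-1}\equiv 3\cdot 2^{k-2}+1\pmod{2^k+1}$ one finds the exceptional $j$ at $r=1,3,4$ to be $2^{k-2}+1$, $3\cdot2^{k-2}+1$, $2^k+1$ respectively, matching \eqref{Auto-distribution-2}, \eqref{Auto-distribution-4}, \eqref{Auto-distribution-6}; the solution for $r=2$ is $j=2^{k-1}+1$, but there $y=0,v=2$ and $y\neq0,v=2$ both give $AC=0$, so that block is not exceptional and sits inside the range \eqref{Auto-distribution-3}.
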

\begin{proof}
Note that $\tau_1\equiv \tau_2\ (\mathrm{mod}\ 4(2^k+1))$ implies $\tau_1\equiv \tau_2\ (\mathrm{mod}\ 2^k+1)$ and $\tau_1\equiv \tau_2\ (\mathrm{mod}\ 4)$. Then, from Theorem \ref{Yu-Gong}, we find that $\tau_1$ and $\tau_2$ induce the same pair $(y,v)$, which leads to $AC_{\mathbf{s}}(\tau_1)=AC_{\mathbf{s}}(\tau_2)$.
Other results can also be directly verified by Theorem \ref{Yu-Gong}. \ \ \ \ \ \ \ \ \ \ \ \ \ \ \ \ \ \ \ \ \ \ \ \ \ \ \ \ \ \ \ \ \ \ \ \ \ \  \ \ \ \ \ \ \ \ \ \ \ \ \ \ \ \ \ \ \ \ \ \ \ $\Box$
\end{proof}
\textcolor{blue}{
\begin{example}
By direct computation using Matlab programs, the autocorrelation distributions of Yu-Gong sequences for $k=2,3,4$ have been listed in Tables 1-3. And we have marked these three special sets $S_{2^{k-2}+1},S_{3\times2^{k-2}+1},S_{2^{k}+1}$ in red. Especially, the number of sets in Eq. (\ref{Auto-distribution-5}) is $0$ since $2^{k-2}-1=0$ for $k=2$ and the autocorrelation distribution is the set $\{-4,0,0,4,-4,0,-4,4,0,0,-4\}$ for $k=1$ and $\tau=1,2,3,4,5,6,7,8,9,10,11$.
\end{example}}
\textcolor{blue}{
\begin{table}
\centering
{\tiny
\bf{\textcolor{blue}{Table 1: The autocorrelation of Yu-Gong sequence for $k=2$}}
\begin{tabular}{|c|c|}
\hline$\tau$&$\mathrm{AC}_\mathbf{s}(\tau)$\\
\hline $\ \ \ 1-20\ (S_{1}-S_{5})$&$-4,0,-4,4;\textcolor{red}{\mathbf{0,0,-4,4}};-4,0,-4,4;\textcolor{red}{\mathbf{-4,0,0,4}};\textcolor{red}{\mathbf{-4,0,-4,-4}};$\\
\hline $21-40\ \ \ \ \ \ \ \ \ \ \ \ $&$-4,0,-4,4;\textcolor{red}{\mathbf{0,0,-4,4}};-4,0,-4,4;\textcolor{red}{\mathbf{-4,0,0,4}};\textcolor{red}{\mathbf{-4,0,-4,-4}};$\\
\hline $41-59\ \ \ \ \ \ \ \ \ \ \ \ $&$-4,0,-4,4;\textcolor{red}{\mathbf{0,0,-4,4}};-4,0,-4,4;\textcolor{red}{\mathbf{-4,0,0,4}};\textcolor{red}{\mathbf{-4,0,-4\ \ \ \ }}$;\\
\hline
\end{tabular}
}
\end{table}}
\newcommand{\tabincell}[2]{\begin{tabular}{@{}#1@{}}#2\end{tabular}}
\begin{table}
\centering
{\tiny
\bf{\textcolor{blue}{Table 2: The autocorrelation of Yu-Gong sequence for $k=3$}}
\begin{tabular}{|c|c|}
\hline$\tau$&$\mathrm{AC}_\mathbf{s}(\tau)$\\
\hline $\ \ \ 1-36\ (S_{1}-S_{9})$&\tabincell{l} {$-4,0,-4,4;-4,0,-4,4;\textcolor{red}{\mathbf{0,0,-4,4}};-4,0,-4,4;-4,0,-4,4;$\\
$-4,0,-4,4;\textcolor{red}{\mathbf{-4,0,0,4}};-4,0,-4,4;\textcolor{red}{\mathbf{-4,0,-4,-4}};$}\\
\hline $37-72\ \ \ \ \ \ \ \ \ \ \ \ $&\tabincell{l} {$-4,0,-4,4;-4,0,-4,4;\textcolor{red}{\mathbf{0,0,-4,4}};-4,0,-4,4;-4,0,-4,4;$\\
$-4,0,-4,4;\textcolor{red}{\mathbf{-4,0,0,4}};-4,0,-4,4;\textcolor{red}{\mathbf{-4,0,-4,-4}};$}\\
\hline $73-108\ \ \ \ \ \ \ \ \ \ \ \ $&\tabincell{l} {$-4,0,-4,4;-4,0,-4,4;\textcolor{red}{\mathbf{0,0,-4,4}};-4,0,-4,4;-4,0,-4,4;$\\
$-4,0,-4,4;\textcolor{red}{\mathbf{-4,0,0,4}};-4,0,-4,4;\textcolor{red}{\mathbf{-4,0,-4,-4}};$}\\
\hline $109-144\ \ \ \ \ \ \ \ \ \ \ \ $&\tabincell{l} {$-4,0,-4,4;-4,0,-4,4;\textcolor{red}{\mathbf{0,0,-4,4}};-4,0,-4,4;-4,0,-4,4;$\\
$-4,0,-4,4;\textcolor{red}{\mathbf{-4,0,0,4}};-4,0,-4,4;\textcolor{red}{\mathbf{-4,0,-4,-4}};$}\\
\hline $145-180\ \ \ \ \ \ \ \ \ \ \ \ $&\tabincell{l} {$-4,0,-4,4;-4,0,-4,4;\textcolor{red}{\mathbf{0,0,-4,4}};-4,0,-4,4;-4,0,-4,4;$\\
$-4,0,-4,4;\textcolor{red}{\mathbf{-4,0,0,4}};-4,0,-4,4;\textcolor{red}{\mathbf{-4,0,-4,-4}};$}\\
\hline $181-216\ \ \ \ \ \ \ \ \ \ \ \ $&\tabincell{l} {$-4,0,-4,4;-4,0,-4,4;\textcolor{red}{\mathbf{0,0,-4,4}};-4,0,-4,4;-4,0,-4,4;$\\
$-4,0,-4,4;\textcolor{red}{\mathbf{-4,0,0,4}};-4,0,-4,4;\textcolor{red}{\mathbf{-4,0,-4,-4}};$}\\
\hline $217-251\ \ \ \ \ \ \ \ \ \ \ \ $&\tabincell{l} {$-4,0,-4,4;-4,0,-4,4;\textcolor{red}{\mathbf{0,0,-4,4}};-4,0,-4,4;-4,0,-4,4;$\\
$-4,0,-4,4;\textcolor{red}{\mathbf{-4,0,0,4}};-4,0,-4,4;\textcolor{red}{\mathbf{-4,0,-4\ \ \ \ }};$}\\
\hline
\end{tabular}
}
\end{table}

\begin{table}
\centering
{\tiny
\bf{\textcolor{blue}{Table 3: The autocorrelation of Yu-Gong sequence for $k=4$}}
\begin{tabular}{|c|c|}
\hline$\tau$&$\mathrm{AC}_\mathbf{s}(\tau)$\\
\hline $\ \ \ 1-68\ (S_{1}-S_{17})$&\tabincell{l} {$-4,0,-4,4;-4,0,-4,4;-4,0,-4,4;-4,0,-4,4;\textcolor{red}{\mathbf{0,0,-4,4}};$\\
$-4,0,-4,4;-4,0,-4,4;-4,0,-4,4;-4,0,-4,4;-4,0,-4,4;$\\
$-4,0,-4,4;-4,0,-4,4;\textcolor{red}{\mathbf{-4,0,0,4}};-4,0,-4,4;-4,0,-4,4;$\\
$-4,0,-4,4;\textcolor{red}{\mathbf{-4,0,-4,-4}};$}\\
\hline $69-136\ \ \ \ \ \ \ \ \ \ \ \ $&\tabincell{l} {$-4,0,-4,4;-4,0,-4,4;-4,0,-4,4;-4,0,-4,4;\textcolor{red}{\mathbf{0,0,-4,4}};$\\
$-4,0,-4,4;-4,0,-4,4;-4,0,-4,4;-4,0,-4,4;-4,0,-4,4;$\\
$-4,0,-4,4;-4,0,-4,4;\textcolor{red}{\mathbf{-4,0,0,4}};-4,0,-4,4;-4,0,-4,4;$\\
$-4,0,-4,4;\textcolor{red}{\mathbf{-4,0,-4,-4}};$}\\
\hline $137-204\ \ \ \ \ \ \ \ \ \ \ \ $&\tabincell{l} {$-4,0,-4,4;-4,0,-4,4;-4,0,-4,4;-4,0,-4,4;\textcolor{red}{\mathbf{0,0,-4,4}};$\\
$-4,0,-4,4;-4,0,-4,4;-4,0,-4,4;-4,0,-4,4;-4,0,-4,4;$\\
$-4,0,-4,4;-4,0,-4,4;\textcolor{red}{\mathbf{-4,0,0,4}};-4,0,-4,4;-4,0,-4,4;$\\
$-4,0,-4,4;\textcolor{red}{\mathbf{-4,0,-4,-4}};$}\\
\hline $205-272\ \ \ \ \ \ \ \ \ \ \ \ $&\tabincell{l} {$-4,0,-4,4;-4,0,-4,4;-4,0,-4,4;-4,0,-4,4;\textcolor{red}{\mathbf{0,0,-4,4}};$\\
$-4,0,-4,4;-4,0,-4,4;-4,0,-4,4;-4,0,-4,4;-4,0,-4,4;$\\
$-4,0,-4,4;-4,0,-4,4;\textcolor{red}{\mathbf{-4,0,0,4}};-4,0,-4,4;-4,0,-4,4;$\\
$-4,0,-4,4;\textcolor{red}{\mathbf{-4,0,-4,-4}};$}\\
\hline $273-340\ \ \ \ \ \ \ \ \ \ \ \ $&\tabincell{l} {$-4,0,-4,4;-4,0,-4,4;-4,0,-4,4;-4,0,-4,4;\textcolor{red}{\mathbf{0,0,-4,4}};$\\
$-4,0,-4,4;-4,0,-4,4;-4,0,-4,4;-4,0,-4,4;-4,0,-4,4;$\\
$-4,0,-4,4;-4,0,-4,4;\textcolor{red}{\mathbf{-4,0,0,4}};-4,0,-4,4;-4,0,-4,4;$\\
$-4,0,-4,4;\textcolor{red}{\mathbf{-4,0,-4,-4}};$}\\
\hline $341-408\ \ \ \ \ \ \ \ \ \ \ \ $&\tabincell{l} {$-4,0,-4,4;-4,0,-4,4;-4,0,-4,4;-4,0,-4,4;\textcolor{red}{\mathbf{0,0,-4,4}};$\\
$-4,0,-4,4;-4,0,-4,4;-4,0,-4,4;-4,0,-4,4;-4,0,-4,4;$\\
$-4,0,-4,4;-4,0,-4,4;\textcolor{red}{\mathbf{-4,0,0,4}};-4,0,-4,4;-4,0,-4,4;$\\
$-4,0,-4,4;\textcolor{red}{\mathbf{-4,0,-4,-4}};$}\\
\hline $409-476\ \ \ \ \ \ \ \ \ \ \ \ $&\tabincell{l} {$-4,0,-4,4;-4,0,-4,4;-4,0,-4,4;-4,0,-4,4;\textcolor{red}{\mathbf{0,0,-4,4}};$\\
$-4,0,-4,4;-4,0,-4,4;-4,0,-4,4;-4,0,-4,4;-4,0,-4,4;$\\
$-4,0,-4,4;-4,0,-4,4;\textcolor{red}{\mathbf{-4,0,0,4}};-4,0,-4,4;-4,0,-4,4;$\\
$-4,0,-4,4;\textcolor{red}{\mathbf{-4,0,-4,-4}};$}\\
\hline $477-544\ \ \ \ \ \ \ \ \ \ \ \ $&\tabincell{l} {$-4,0,-4,4;-4,0,-4,4;-4,0,-4,4;-4,0,-4,4;\textcolor{red}{\mathbf{0,0,-4,4}};$\\
$-4,0,-4,4;-4,0,-4,4;-4,0,-4,4;-4,0,-4,4;-4,0,-4,4;$\\
$-4,0,-4,4;-4,0,-4,4;\textcolor{red}{\mathbf{-4,0,0,4}};-4,0,-4,4;-4,0,-4,4;$\\
$-4,0,-4,4;\textcolor{red}{\mathbf{-4,0,-4,-4}};$}\\
\hline $545-612\ \ \ \ \ \ \ \ \ \ \ \ $&\tabincell{l} {$-4,0,-4,4;-4,0,-4,4;-4,0,-4,4;-4,0,-4,4;\textcolor{red}{\mathbf{0,0,-4,4}};$\\
$-4,0,-4,4;-4,0,-4,4;-4,0,-4,4;-4,0,-4,4;-4,0,-4,4;$\\
$-4,0,-4,4;-4,0,-4,4;\textcolor{red}{\mathbf{-4,0,0,4}};-4,0,-4,4;-4,0,-4,4;$\\
$-4,0,-4,4;\textcolor{red}{\mathbf{-4,0,-4,-4}};$}\\
\hline $613-680\ \ \ \ \ \ \ \ \ \ \ \ $&\tabincell{l} {$-4,0,-4,4;-4,0,-4,4;-4,0,-4,4;-4,0,-4,4;\textcolor{red}{\mathbf{0,0,-4,4}};$\\
$-4,0,-4,4;-4,0,-4,4;-4,0,-4,4;-4,0,-4,4;-4,0,-4,4;$\\
$-4,0,-4,4;-4,0,-4,4;\textcolor{red}{\mathbf{-4,0,0,4}};-4,0,-4,4;-4,0,-4,4;$\\
$-4,0,-4,4;\textcolor{red}{\mathbf{-4,0,-4,-4}};$}\\
\hline $681-748\ \ \ \ \ \ \ \ \ \ \ \ $&\tabincell{l} {$-4,0,-4,4;-4,0,-4,4;-4,0,-4,4;-4,0,-4,4;\textcolor{red}{\mathbf{0,0,-4,4}};$\\
$-4,0,-4,4;-4,0,-4,4;-4,0,-4,4;-4,0,-4,4;-4,0,-4,4;$\\
$-4,0,-4,4;-4,0,-4,4;\textcolor{red}{\mathbf{-4,0,0,4}};-4,0,-4,4;-4,0,-4,4;$\\
$-4,0,-4,4;\textcolor{red}{\mathbf{-4,0,-4,-4}};$}\\
\hline $749-816\ \ \ \ \ \ \ \ \ \ \ \ $&\tabincell{l} {$-4,0,-4,4;-4,0,-4,4;-4,0,-4,4;-4,0,-4,4;\textcolor{red}{\mathbf{0,0,-4,4}};$\\
$-4,0,-4,4;-4,0,-4,4;-4,0,-4,4;-4,0,-4,4;-4,0,-4,4;$\\
$-4,0,-4,4;-4,0,-4,4;\textcolor{red}{\mathbf{-4,0,0,4}};-4,0,-4,4;-4,0,-4,4;$\\
$-4,0,-4,4;\textcolor{red}{\mathbf{-4,0,-4,-4}};$}\\
\hline $817-884\ \ \ \ \ \ \ \ \ \ \ \ $&\tabincell{l} {$-4,0,-4,4;-4,0,-4,4;-4,0,-4,4;-4,0,-4,4;\textcolor{red}{\mathbf{0,0,-4,4}};$\\
$-4,0,-4,4;-4,0,-4,4;-4,0,-4,4;-4,0,-4,4;-4,0,-4,4;$\\
$-4,0,-4,4;-4,0,-4,4;\textcolor{red}{\mathbf{-4,0,0,4}};-4,0,-4,4;-4,0,-4,4;$\\
$-4,0,-4,4;\textcolor{red}{\mathbf{-4,0,-4,-4}};$}\\
\hline $885-952\ \ \ \ \ \ \ \ \ \ \ \ $&\tabincell{l} {$-4,0,-4,4;-4,0,-4,4;-4,0,-4,4;-4,0,-4,4;\textcolor{red}{\mathbf{0,0,-4,4}};$\\
$-4,0,-4,4;-4,0,-4,4;-4,0,-4,4;-4,0,-4,4;-4,0,-4,4;$\\
$-4,0,-4,4;-4,0,-4,4;\textcolor{red}{\mathbf{-4,0,0,4}};-4,0,-4,4;-4,0,-4,4;$\\
$-4,0,-4,4;\textcolor{red}{\mathbf{-4,0,-4,-4}};$}\\
\hline $953-1019\ \ \ \ \ \ \ \ \ \ \ \ $&\tabincell{l} {$-4,0,-4,4;-4,0,-4,4;-4,0,-4,4;-4,0,-4,4;\textcolor{red}{\mathbf{0,0,-4,4}};$\\
$-4,0,-4,4;-4,0,-4,4;-4,0,-4,4;-4,0,-4,4;-4,0,-4,4;$\\
$-4,0,-4,4;-4,0,-4,4;\textcolor{red}{\mathbf{-4,0,0,4}};-4,0,-4,4;-4,0,-4,4;$\\
$-4,0,-4,4;\textcolor{red}{\mathbf{-4,0,-4\ \ \ \ }};$}\\
\hline
\end{tabular}
}
\end{table}

\section{The 2-adic complexity of Yu-Gong sequence}

In order to derive a lower bound on the 2-adic complexity of the Yu-Gong sequence $\mathbf{s}$, we need employ the method of Hu \cite{Hu-1}. It can be described as the following Lemma \ref{Sun}, which have also been used in several other references \textcolor{blue}{\cite{Hofer-1,SunYuhua-1,SunYuhua-2,Xiong-2}.}
\begin{lemma}\label{Sun}\cite{Hu-1}
Let $\mathbf{s}=\big(s_0,s_1,\cdots,s_{N-1}\big)$ be a binary sequence of period $N$, $S(x)=\sum\limits_{i=0}^{N-1}s_{i}x^i\in \mathbb{Z}[x]$ and $T(x)=\sum\limits_{i=0}^{N-1}(-1)^{s_{i}}x^i\in \mathbb{Z}[x]$. Then
\begin{eqnarray}
-2S(x)T\big(x^{-1}\big)\equiv N+\sum\limits_{\tau=1}^{N-1}AC_{\mathbf{s}}(\tau)x^{\tau}-T\big(x^{-1}\big)\left(\sum\limits_{i=0}^{N-1}x^i\right)\pmod{x^N-1}.\nonumber\ \Box
\end{eqnarray}
\end{lemma}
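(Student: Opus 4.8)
The plan is to derive the congruence directly from the two definitions, using only the elementary fact that $(-1)^{s_i}=1-2s_i$ for a bit $s_i\in\{0,1\}$ together with bookkeeping modulo $x^N-1$; note that $x$ is a unit in $\mathbb{Z}[x]/(x^N-1)$ since $x\cdot x^{N-1}\equiv 1$, so $x^{-1}$ and $T(x^{-1})$ are well defined there. Set $P(x):=\sum_{i=0}^{N-1}x^i$. From $(-1)^{s_i}=1-2s_i$ one gets at once
$$T(x)=\sum_{i=0}^{N-1}(1-2s_i)x^i=P(x)-2S(x).$$

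First I would compute the product $T(x)T(x^{-1})$ and recognize the autocorrelation polynomial. Expanding,
$$T(x)T(x^{-1})=\sum_{i=0}^{N-1}\sum_{j=0}^{N-1}(-1)^{s_i+s_j}x^{i-j},$$
and I would collect terms according to the exponent $\tau\equiv i-j\pmod N$. For each residue $\tau\in\{0,1,\dots,N-1\}$ and each $j\in\{0,\dots,N-1\}$ there is a unique $i\in\{0,\dots,N-1\}$ with $i\equiv j+\tau$, and by the $N$-periodicity of $\{s_i\}$ we have $(-1)^{s_i}=(-1)^{s_{j+\tau}}$; hence after reduction the coefficient of $x^\tau$ is $\sum_{j=0}^{N-1}(-1)^{s_j+s_{j+\tau}}=AC_{\mathbf{s}}(\tau)$. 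Since $AC_{\mathbf{s}}(0)=N$, this yields
$$T(x)T(x^{-1})\equiv N+\sum_{\tau=1}^{N-1}AC_{\mathbf{s}}(\tau)x^{\tau}\pmod{x^N-1}.$$

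Next I would substitute $T(x)=P(x)-2S(x)$ into the left-hand factor only, obtaining $T(x)T(x^{-1})=P(x)T(x^{-1})-2S(x)T(x^{-1})$, and solve for $-2S(x)T(x^{-1})$. Combining with the previous display gives
$$-2S(x)T(x^{-1})\equiv N+\sum_{\tau=1}^{N-1}AC_{\mathbf{s}}(\tau)x^{\tau}-T(x^{-1})\left(\sum_{i=0}^{N-1}x^i\right)\pmod{x^N-1},$$
which is exactly the claimed identity. There is essentially no obstacle here; the one place needing a little care is the step that collects the terms of $T(x)T(x^{-1})$ modulo $x^N-1$ — one must check that the negative exponents $i-j<0$ are folded in correctly and that the coefficient of each $x^\tau$ comes out as the full periodic autocorrelation $AC_{\mathbf{s}}(\tau)$ rather than a truncated sum, which is precisely where the periodicity $s_{i+N}=s_i$ is invoked.
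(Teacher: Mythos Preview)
Your argument is correct. The paper does not give its own proof of this lemma; it simply quotes the result from Hu's paper \cite{Hu-1} (note the $\Box$ at the end of the statement with no intervening argument). Your derivation---writing $T(x)=P(x)-2S(x)$ from $(-1)^{s_i}=1-2s_i$, identifying $T(x)T(x^{-1})\bmod (x^N-1)$ with the autocorrelation generating polynomial, and then substituting---is exactly the standard route and is essentially how Hu originally obtained it.
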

\textcolor{blue}{
The following Lemma \ref{number-theory-1} is also important to prove our main result.
\begin{lemma}\label{number-theory-1}
Let $k$ be a positive integer. Then the following results hold:
\begin{itemize}
\item[(1)] For $k\equiv2\ \mathrm{mod}\ 4$, we have
\begin{eqnarray}
5\mid \mathrm{gcd}\Big(2^{2k-1}-2^k+1,\frac{2^{2(2^k+1)}+1}{5}\Big).\label{Extend-0}
\end{eqnarray}
But for $k\equiv0\ (\mathrm{mod}\ 4)$, we have
\begin{eqnarray}
5\nmid \mathrm{gcd}\Big(2^{2k-1}-2^k+1,\frac{2^{2(2^k+1)}+1}{5}\Big).\label{Extend-000}
\end{eqnarray}
\item[(2)]
$
\mathrm{gcd}\Big(2^{2k-1}-2^k+1,\frac{2^{2(2^k+1)}+1}{5}\Big)
\left\{
\begin{array}{ll}
=1,\ \ \ \ \ \ \mathrm{if}\ 2^{2k-1}-2^k+1\ \mathrm{is\ a}\\
\ \ \ \ \ \ \ \ \ \ \ \mathrm{\ prime\ for\ }k\equiv0\ (\mathrm{mod}\ 4),\\
< 2^{2k-1},\ \mathrm{otherwise.}
\end{array}
\right.
$
\item[(3)] $\mathrm{gcd}\Big(2^{k-1}-1,2^{2^k+1}+1\Big)\left\{
\begin{array}{llll}
=1,\ \ \ \ \ \mathrm{if}\ k\ \mathrm{is\ even},\\
< 2^{k-1},\ \mathrm{otherwise}.
\end{array}
\right.$
\end{itemize}
\end{lemma}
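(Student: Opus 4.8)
The plan is to handle the three items separately, noting first that two of the inequalities need no number theory at all: in item~(2) every common divisor of $2^{2k-1}-2^k+1$ and $\frac{2^{2(2^k+1)}+1}{5}$ divides $2^{2k-1}-2^k+1 = 2^{2k-1}-(2^k-1) < 2^{2k-1}$, and in item~(3) every common divisor of $2^{k-1}-1$ and $2^{2^k+1}+1$ is at most $2^{k-1}-1 < 2^{k-1}$. Hence the actual content is: the two divisibility assertions~(\ref{Extend-0}) and~(\ref{Extend-000}) of item~(1); the claim that the gcd equals $1$ in the prime case of item~(2); and the claim that it equals $1$ when $k$ is even in item~(3).

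For item~(1) I would argue entirely by congruences. Since $2^k+1$ is odd and $4\equiv-1\pmod 5$, we get $2^{2(2^k+1)}+1 = 4^{2^k+1}+1 \equiv 0\pmod 5$, so $\frac{2^{2(2^k+1)}+1}{5}$ is an integer for every $k$. Using the period-$4$ pattern of $2^j\bmod 5$, I would check class by class that $2^{2k-1}-2^k+1\equiv 0\pmod 5$ exactly when $k\equiv 2,3\pmod 4$, and $\equiv 3\pmod 5$ when $k\equiv 0\pmod 4$. Using that $4$ has order $10$ modulo $25$ (so $4^5\equiv-1\pmod{25}$), I would check that $25\mid 4^{2^k+1}+1$ exactly when $2^k+1\equiv 5\pmod{10}$, i.e.\ when $2^k\equiv 4\pmod{10}$, i.e.\ when $k\equiv 2\pmod 4$. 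For $k\equiv 2\pmod 4$ both $5\mid 2^{2k-1}-2^k+1$ and $5\mid\frac{2^{2(2^k+1)}+1}{5}$ then hold, giving~(\ref{Extend-0}); for $k\equiv 0\pmod 4$, already $5\nmid 2^{2k-1}-2^k+1$, which gives~(\ref{Extend-000}).

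For items~(2) and~(3) the tool is the multiplicative order of $2$ modulo a prime divisor. Set $p := 2^{2k-1}-2^k+1$; since $p+(2^k-1)=2^{2k-1}$ we have $2^{2k-1}\equiv 2^k-1\pmod p$, hence $2^{2(2k-1)}\equiv(2^k-1)^2 = 2p-1\equiv-1\pmod p$, i.e.\ $p\mid 2^{2(2k-1)}+1$ (this is the Aurifeuillian identity $(2^{2k-1}-2^k+1)(2^{2k-1}+2^k+1)=(2^{2k-1}+1)^2-(2^k)^2=2^{2(2k-1)}+1$). Now assume $k\equiv 0\pmod 4$, $p$ prime, and $p\mid\frac{2^{2(2^k+1)}+1}{5}$; since $5\nmid p$ by item~(1), also $p\mid 2^{2(2^k+1)}+1$. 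From $2^{2(2k-1)}\equiv-1$ and $2^{2(2^k+1)}\equiv-1\pmod p$, using that $2k-1$ and $2^k+1$ are odd (whenever $2^{2m}\equiv-1\pmod p$ with $m$ odd, $\mathrm{ord}_p(2)$ divides $4m$ but not $2m$, so its $2$-part is exactly $4$), the order is forced to be $\mathrm{ord}_p(2)=4u$ with $u$ odd, $u\mid 2k-1$ and $u\mid 2^k+1$. Since $4u=\mathrm{ord}_p(2)\mid p-1=2^k(2^{k-1}-1)$ and $u$ is odd, $u\mid 2^{k-1}-1$, so $u\mid\gcd(2^{k-1}-1,2^k+1)$. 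But $2^k+1=2(2^{k-1}-1)+3$, so $\gcd(2^{k-1}-1,2^k+1)=\gcd(2^{k-1}-1,3)$, and $k\equiv 0\pmod 4$ makes $k-1$ odd, whence $2^{k-1}-1\equiv 1\pmod 3$ and this gcd is $1$. Therefore $u=1$ and $\mathrm{ord}_p(2)=4$, so $p\mid 2^4-1=15$ — impossible, since $p\ge 2^7-2^4+1=113$ when $k\ge 4$. This contradiction gives the asserted value $1$ in the prime case. For item~(3) with $k$ even, suppose a prime $p$ divides both $2^{k-1}-1$ and $2^{2^k+1}+1$; then $\mathrm{ord}_p(2)\mid k-1$ is odd, while $2^{2(2^k+1)}\equiv 1\pmod p$ forces $\mathrm{ord}_p(2)\mid 2(2^k+1)$, hence (being odd) $\mathrm{ord}_p(2)\mid 2^k+1$, so $2^{2^k+1}\equiv 1\equiv-1\pmod p$ and $p\mid 2$, a contradiction; thus the gcd is $1$.

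The routine portions are the congruence bookkeeping of item~(1) and the two triviality remarks. The one spot requiring a small idea is the prime case of item~(2): the key move is to feed $\mathrm{ord}_p(2)=4u$ through the divisibility $4u\mid p-1=2^k(2^{k-1}-1)$ in order to trade the unhelpful condition $u\mid 2k-1$ for $u\mid 2^{k-1}-1$, after which $\gcd(2^{k-1}-1,2^k+1)$ collapses to $\gcd(2^{k-1}-1,3)=1$. Getting the $2$-adic valuation of $\mathrm{ord}_p(2)$ exactly right (so that $u$ is odd) and carrying out this reduction are the two places I expect to need care.
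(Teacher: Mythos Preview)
Your proof is correct, and for items~(2) and~(3) it takes a genuinely different route from the paper. The paper argues globally via the identity $\gcd(2^a-1,2^b-1)=2^{\gcd(a,b)}-1$: assuming $p=2^{2k-1}-2^k+1$ is prime, Fermat gives $p\mid 2^{p-1}-1$, and one computes $\gcd\big(2^{p-1}-1,\,2^{4(2^k+1)}-1\big)=2^{\gcd(p-1,4(2^k+1))}-1$, then reduces $\gcd\big(2^k(2^{k-1}-1),4(2^k+1)\big)$ to $4\gcd(2^{k-1}-1,3)$ and finally separates the factor $2^{2(2^k+1)}+1$ from $2^{4(2^k+1)}-1$. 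Item~(3) in the paper is handled by the same factor-splitting trick applied to $2^{2(2^k+1)}-1$. You instead work prime-by-prime with the multiplicative order of $2$: from $2^{2(2^k+1)}\equiv-1\pmod p$ and the oddness of $2^k+1$ you pin down $\mathrm{ord}_p(2)=4u$ with $u$ odd and $u\mid 2^k+1$, then Fermat forces $u\mid 2^{k-1}-1$, and both approaches converge on the same Euclidean step $\gcd(2^{k-1}-1,2^k+1)=\gcd(2^{k-1}-1,3)=1$ for $k$ even. Your version is a bit leaner and avoids the bookkeeping of splitting $2^{4(2^k+1)}-1$ into factors; the paper's version has the minor advantage of giving the exact value $\gcd\big(2^{p-1}-1,2^{2(2^k+1)}+1\big)=5$ along the way. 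Note that your Aurifeuillian observation $p\mid 2^{2(2k-1)}+1$ is correct but, as you yourself remark, not actually needed: the condition $u\mid 2k-1$ it yields is never used, since $u\mid 2^{k-1}-1$ comes from Fermat and $u\mid 2^k+1$ from the hypothesis.
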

\begin{proof}
\begin{itemize}
\item[(1)] Suppose $k=4t+2$. Note that the multiplicative order of $2$ modular $5$ is 4. On one hand, we have
$$2^{2k-1}-2^k+1=2^{8t+3}-2^{4t+2}+1\equiv2^3-2^2+1\equiv0\ (\mathrm{mod}\ 5),$$
i.e.,
\begin{eqnarray}
5\mid(2^{2k-1}-2^k+1),\label{extend-00}
\end{eqnarray}
on the other hand, since $5\mid (2^{4t}-1)$, we get $20\mid [4(2^{4t}-1)]$, i.e., $20\mid (2^k-4)$, which implies $2^k+1\equiv5\ (\mathrm{mod}\ 20)$ and $2(2^k+1)\equiv10\ (\mathrm{mod}\ 20)$. It is easy to know that the multiplicative order of $2$ modular $25$ is $20$. Then we have $2^{2(2^k+1)}+1\equiv2^{10}+1\equiv0\ (\mathrm{mod}\ 25)$, i.e.,
\begin{eqnarray}
5\mid \frac{2^{2(2^k+1)}+1}{5}. \label{extend-01}
\end{eqnarray}
Combining (\ref{extend-00}) and (\ref{extend-01}) we know that (\ref{Extend-0}) holds. But, if $k=4t$, we have $2^{2k-1}-2^{k}+1\equiv2^3-1+1\equiv3\ (\mathrm{mod}\ 5)$. Therefore, (\ref{Extend-000}) holds.
\item[(2)] Suppose that $2^{2k-1}-2^k+1$ is a prime. Then by little Fermat Theorem we have
\begin{eqnarray}
(2^{2k-1}-2^k+1)\mid (2^{2^{2k-1}-2^k}-1).\label{extend-1}
\end{eqnarray}
Note that
\begin{eqnarray}
\frac{2^{2(2^k+1)}+1}{5}\mid (2^{2(2^k+1)}+1),\ (2^{2(2^k+1)}+1)\mid (2^{4(2^k+1)}-1).\label{extend-2}
\end{eqnarray}
Furthermore, we have
\begin{eqnarray}
\mathrm{gcd}\Big(2^{2^{2k-1}-2^k}-1,2^{4(2^k+1)}-1\Big)&=&2^{\mathrm{gcd}\big(2^{2k-1}-2^k,4(2^k+1)\big)}-1\nonumber\\
&=&2^{\mathrm{gcd}\big(2^k(2^{k-1}-1),4(2^k+1)\big)}-1.\label{extend-3}
\end{eqnarray}
Without loss of generality, let $k\geq2$ (for $k=1$, the conclusion is trivial). Then
\begin{eqnarray}
\mathrm{gcd}\Big(2^k(2^{k-1}-1),4(2^k+1)\Big)&=&4\mathrm{gcd}\Big(2^{k-1}-1,2^k+1\Big)\nonumber\\
&=&4\mathrm{gcd}\big(2^{k-1}-1,3\big)\label{extend-4}\\
&=&\left\{
\begin{array}{ll}
4,\ \ \mathrm{if}\ k\ \mathrm{is\ even},\\
12,\ \mathrm{otherwise},
\end{array}
\right.\label{extend-5}
\end{eqnarray}
where (\ref{extend-4}) holds because $2^k+1=2(2^{k-1}-1)+3$.
Thus, by (\ref{extend-3}), (\ref{extend-5}), we get
\begin{eqnarray}
\mathrm{gcd}\Big(2^{2^{2k-1}-2^k}-1,2^{4(2^k+1)}-1\Big)=\left\{
\begin{array}{ll}
2^4-1,\ \ \mathrm{if}\ k\ \mathrm{is\ even},\\
2^{12}-1,\ \mathrm{otherwise},
\end{array}
\right.\label{extend-6}
\end{eqnarray}
We know that
\begin{eqnarray}
&2^{4(2^k+1)}-1=(2^{2(2^k+1)}-1)(2^{2(2^k+1)}+1),\label{extend-7}\\
&\mathrm{gcd}(2^{2(2^k+1)}-1,2^{2(2^k+1)}+1)=1,\label{extend-8}\\
&2^4-1=(2^2-1)(2^2+1)=3\times5,\label{extend-9}\\
&3\mid(2^{2(2^k+1)}-1),\ 3\nmid(2^{2(2^k+1)}+1),\label{extend-10}\\
&5\mid(2^{2(2^k+1)}+1).\label{extend-11}
\end{eqnarray}
Therefore, by (\ref{extend-6})-(\ref{extend-11}), we have
\begin{eqnarray}
\mathrm{gcd}\Big(2^{2^{2k-1}-2^k}-1,2^{2(2^k+1)}+1\Big)=5\ \mathrm{for\ an\ even}\ k.\label{extend-12}
\end{eqnarray}
Finally, combining (\ref{Extend-0})-(\ref{Extend-000}), (\ref{extend-1})-(\ref{extend-2}) and (\ref{extend-12}), the result follows.
\item[(3)] It is easy to known that
\begin{eqnarray}
&&2^{2(2^k+1)}-1=(2^{2^k+1}+1)(2^{2^k+1}-1),\nonumber\\
&&\mathrm{gcd}\Big(2^{2^k+1}+1,2^{2^k+1}-1\Big)=1,\nonumber\\
&&\mathrm{gcd}\Big(2^{k-1}-1,2^{2(2^k+1)}-1\Big)\nonumber\\
&&=\mathrm{gcd}\Big(2^{k-1}-1,2^{2^k+1}+1\Big)\times\mathrm{gcd}\Big(2^{k-1}-1,2^{2^k+1}-1\Big),\label{extend-17}\\
&&=2^{\mathrm{gcd}\big(k-1,2(2^k+1)\big)}-1,\label{extend-18}\\
&&\mathrm{gcd}\Big(2^{k-1}-1,2^{2^k+1}-1\Big)=2^{\mathrm{gcd}\big(k-1,2^k+1\big)}-1.\label{extend-19}
\end{eqnarray}
For an even $k$, $k-1$ is odd, then
$$\mathrm{gcd}\big(k-1,2(2^k+1)\big)=\mathrm{gcd}\big(k-1,2^k+1\big),$$
which results in
\begin{eqnarray}
&&\mathrm{gcd}\Big(2^{k-1}-1,2^{2(2^k+1)}-1\Big)=\mathrm{gcd}\Big(2^{k-1}-1,2^{2^k+1}-1\Big)\label{extend-19}
\end{eqnarray}
by (\ref{extend-18})-(\ref{extend-19}). Furthermore, combining (\ref{extend-17}) and (\ref{extend-19}), we can get the desired result.
\end{itemize}
\end{proof}
}
\textcolor{blue}{
\begin{example}
By direct computation using Mathematica programs, we find that the smallest two positive integers $k$'s such that $2^{2k-1}-2^k+1$ are primes and $k\equiv0\ (\mathrm{mod}\ 4)$ are $4$, $24$.
\end{example}
}
Now, we present our main result.
\textcolor{blue}{
\begin{theorem}\label{main result}
Let $\mathbf{s}=I(\mathbf{a},\mathbf{e})+\mathbf{b}$ be the Yu-Gong sequence of period $N=4(2^{2k}-1)$, $k>1$, introduced in Section \ref{section 3}. Then the 2-adic complexity \textcolor{blue}{$\Phi_2(\mathbf{s})$ of $\mathbf{s}$} satisfies the following lower bound
\begin{eqnarray}
\Phi_2(\mathbf{s})\left\{
\begin{array}{llll}
=N,\ \ \ \ \ \ \ \ \ \ \ \ \ \ \ \ \ \ \mathrm{if}\ k\equiv0\ (\mathrm{mod}\ 4)\ \mathrm{and}\ (2^{2k-1}-2^{k}+1)\ \mathrm{is\ a\ prime,}\\
> N-\mathrm{log}_2N+1,\ \ \mathrm{if}\ k\ \mathrm{is\ even,}\\
> N-2\mathrm{log}_2N+4,\ \mathrm{otherwise,}\\
\end{array}
\right.\nonumber
\end{eqnarray}
i.e., the 2-adic complexity of $\mathbf{s}$ far outweight one half of the period.\ \ \ \ \ \ \ \ \ \ $\Box$
\end{theorem}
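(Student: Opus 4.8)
The plan is to run Hu's method at $x=2$. Evaluating Lemma~\ref{Sun} at $x=2$ and using $\sum_{i=0}^{N-1}2^{i}=2^{N}-1\equiv0\pmod{2^{N}-1}$ to kill the last term yields $-2S(2)T(2^{-1})\equiv N+\Sigma\pmod{2^{N}-1}$, where $\Sigma:=\sum_{\tau=1}^{N-1}AC_{\mathbf s}(\tau)2^{\tau}$. Since $\mathrm{gcd}(2,2^{N}-1)=1$, it follows that $g:=\mathrm{gcd}(S(2),2^{N}-1)$ divides $\mathrm{gcd}(N+\Sigma,2^{N}-1)$, so by the defining formula for $\Phi_{2}(\mathbf s)$ it is enough to bound $\mathrm{gcd}(N+\Sigma,2^{N}-1)$ from above; indeed $\Phi_{2}(\mathbf s)\ge\big\lfloor\mathrm{log}_{2}\big((2^{N}-1)/\mathrm{gcd}(N+\Sigma,2^{N}-1)+1\big)\big\rfloor$.

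Next I would compute $N+\Sigma$ explicitly. Put $M=4(2^{k}+1)$, so $N=(2^{k}-1)M$, and let $P=(2^{N}-1)/(2^{M}-1)$. By Corollary~\ref{main-lemma}(1) the autocorrelation is $M$-quasiperiodic in $\tau$, so summing the geometric series over the $2^{k}-1$ periods, with the three exceptional blocks $S_{2^{k-2}+1}$, $S_{3\cdot2^{k-2}+1}$, $S_{2^{k}+1}$ entered as corrections, gives an identity of the shape $N+\Sigma=2^{2k+2}+P\cdot W-\frac{12(2^{N}-1)}{5}$ with $W$ an explicit short signed combination of powers of $2$ (note $5\mid2^{M}-1$ since $4\mid M$, so every denominator here is harmless). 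Reducing modulo $2^{N}-1$, clearing the factor $5$ and the common powers of $2$, and observing that a prime whose multiplicative order does not divide $M$ divides $P$ and hence cannot divide $N+\Sigma$, I reduce the task to bounding $\mathrm{gcd}(\widetilde V,2^{M}-1)$, where $\widetilde V$ is the explicit integer obtained from the relevant combination by replacing $P$ with its residue $2^{k}-1$ modulo $2^{M}-1$.

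Finally I would use the factorisation $2^{M}-1=(2^{2^{k}+1}-1)(2^{2^{k}+1}+1)(2^{2(2^{k}+1)}+1)$ into pairwise coprime parts and reduce $\widetilde V$ modulo each: modulo $2^{2^{k}+1}-1$ it becomes the power of two $2^{2k}$ and contributes nothing; modulo $2^{2^{k}+1}+1$ it becomes a unit times $(2^{k-1}-1)^{2}$; and modulo $2^{2(2^{k}+1)}+1$ it becomes a unit times $2^{2k-1}-2^{k}+1$. Writing $2^{2(2^{k}+1)}+1=5\cdot\frac{2^{2(2^{k}+1)}+1}{5}$, the size of $\mathrm{gcd}(N+\Sigma,2^{N}-1)$ is then controlled by $\mathrm{gcd}(2^{k-1}-1,2^{2^{k}+1}+1)$, by $\mathrm{gcd}\big(2^{2k-1}-2^{k}+1,\frac{2^{2(2^{k}+1)}+1}{5}\big)$, and by a bounded power of $5$ — precisely the quantities treated in Lemma~\ref{number-theory-1}. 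When $k\equiv0\pmod4$ and $2^{2k-1}-2^{k}+1$ is prime, parts (1)--(2) make both gcds equal to $1$ and force $5\nmid N+\Sigma$, so $g=1$ and $\Phi_{2}(\mathbf s)=\lfloor\mathrm{log}_{2}(2^{N})\rfloor=N$. When $k$ is even, part (3) kills the first gcd and part (2) bounds the second by $2^{2k-1}$, making $g$ small relative to $2^{N}$; since $N=4(2^{2k}-1)>2^{2k}$ this gives $\Phi_{2}(\mathbf s)>N-\mathrm{log}_{2}N+1$. When $k$ is odd, the extra $(2^{k-1}-1)^{2}$ factor (bounded via part (3)) together with part (2) gives $\Phi_{2}(\mathbf s)>N-2\mathrm{log}_{2}N+4$.

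The main obstacle is the computation together with the final reconciliation of constants: carrying out the closed-form summation of $\Sigma$ over the $2^{k}-1$ periods (with the three exceptional blocks) without arithmetic slips, and then tracking all the small prime powers through the chain from $\mathrm{gcd}(N+\Sigma,2^{N}-1)$ down to the gcds of Lemma~\ref{number-theory-1} — above all the powers of $5$ (since $5\mid2^{N}-1$ whenever $4\mid N$, with a higher $5$-power when $k\equiv2\pmod4$) and the powers of $2$ that factor out — carefully enough that the resulting bound on $\mathrm{gcd}(S(2),2^{N}-1)$ falls into the asserted regime in each of the three cases. One will likely also want to use the reversal symmetry of the construction, which makes $T(2^{-1})$ a unit multiple of $S(2)$ modulo $2^{N}-1$ and thereby halves the relevant exponents.
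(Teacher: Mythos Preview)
Your outline is essentially the paper's proof: apply Lemma~\ref{Sun} at $x=2$, exploit the $4(2^{k}+1)$-quasiperiodicity of $\{AC_{\mathbf s}(\tau)\}$ from Corollary~\ref{main-lemma} to collapse $\Sigma$ into a single period times $P=(2^{N}-1)/(2^{M}-1)$, then reduce modulo $P$ and modulo the pairwise coprime factors $2^{2^{k}+1}-1$, $2^{2^{k}+1}+1$, $\frac{2^{2(2^{k}+1)}+1}{5}$, $5$ of $2^{M}-1$, obtaining respectively a power of $2$, a unit times $(2^{k-1}-1)^{2}$, a unit times $2^{2k-1}-2^{k}+1$, and a residue depending on $k\bmod 4$; Lemma~\ref{number-theory-1} then delivers the three cases. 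The only remark is that your final paragraph's appeal to a ``reversal symmetry'' making $T(2^{-1})$ a unit multiple of $S(2)$ is neither used in the paper nor needed (and it is not clear the Yu--Gong construction enjoys such a symmetry): the inequality $\gcd(S(2),2^{N}-1)\mid\gcd(S(2)T(2^{-1}),2^{N}-1)$ already suffices, since the product of the three non-trivial bounds $5\cdot(2^{k-1}-1)^{2}\cdot(2^{2k-1}-2^{k}+1)$ is below $N^{2}/16$ for $k$ odd, and below $N/2$ once Lemma~\ref{number-theory-1}(3) kills the middle factor for $k$ even.
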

}
\textcolor{blue}{
\begin{proof} Above all, by Lemma \ref{Sun}, we know that
\begin{eqnarray}
S(2)T(2^{-1})\equiv-\frac{1}{2}\sum_{\tau=1}^{4(2^{2k}-1)-1}AC_{\mathbf{s}}(\tau)2^{\tau}-2(2^{2k}-1)\ \Big(\mathrm{mod}\ 2^{4(2^{2k}-1)}-1\Big)\nonumber
\end{eqnarray}
\textcolor{blue}{From Corollary \ref{main-lemma} (1)}, if we add the value $-4$ to the end of the sequence $\{AC_{\mathbf{s}}(\tau)\}_{\tau=1}^{4(2^{2k}-1)}$, we will get a sequence segment consisting $2^k-1$ periods of the sequence $\{AC_{\mathbf{s}}(\tau)\}_{\tau=1}^{4(2^k+1)}$. Therefore, we have
\begin{eqnarray}
S(2)T(2^{-1})&\equiv&-\frac{1}{2}\Big\{\sum_{\tau=1}^{4(2^{2k}-1)-1}AC_s(\tau)2^{\tau}+(-4)\times 2^{4(2^{2k}-1)}-(-4)\times 2^{4(2^{2k}-1)}\Big\}\nonumber\\
&&-2(2^{2k}-1)\ \Big(\mathrm{mod}\ 2^{4(2^{2k}-1)}-1\Big)\nonumber\\
&=&-\frac{1}{2}\Big\{\Big(\sum_{i=0}^{2^k-2}2^{4(2^k+1)i}\Big)\Big(\sum_{\tau=1}^{4(2^k+1)}AC_{\mathbf{s}}(\tau)2^{\tau}\Big)+\textcolor{blue}{4\times2^{4(2^{2k}-1)}}\Big\}\nonumber\\
&&-2(2^{2k}-1)\ \Big(\mathrm{mod}\ 2^{4(2^{2k}-1)}-1\Big)\nonumber\\
&\equiv&-\frac{1}{2}\Big\{\Big(\frac{2^{4(2^{2k}-1)}-1}{2^{4(2^k+1)}-1}\Big)\Big(\sum_{\tau=1}^{4(2^k+1)}AC_s(\tau)2^{\tau}\Big)+4\Big\}\nonumber\\
&&-2(2^{2k}-1)\ \Big(\mathrm{mod}\ 2^{4(2^{2k}-1)}-1\Big)\nonumber\\
&=&-\frac{1}{2}\Big(\frac{2^{4(2^{2k}-1)}-1}{2^{4(2^k+1)}-1}\Big)\Big(\sum_{\tau=1}^{4(2^k+1)}AC_s(\tau)2^{\tau}\Big)\nonumber\\
&&-2^{2k+1}\ \Big(\mathrm{mod}\ 2^{4(2^{2k}-1)}-1\Big)\label{Key-equation}
\end{eqnarray}
\textcolor{blue}{From Corollary \ref{main-lemma} (2)}, we have
\begin{eqnarray}
&&\sum_{\tau=1}^{4(2^k+1)}AC_s(\tau)2^{\tau}=\Big(\sum_{i=0}^{2^k}2^{4i}\Big)\Big((-4)\times2+0\times2^2+(-4)\times2^3+4\times2^4\Big)\nonumber\\
&&-2^{4\times2^{k-2}}\Big((-4)\times2+0\times2^2+(-4)\times2^3+4\times2^4\Big)\nonumber\\
&&+2^{4\times2^{k-2}}\Big(0\times2+0\times2^2+(-4)\times2^3+4\times2^4\Big)\nonumber\\
&&-2^{4\times3\times2^{k-2}}\Big((-4)\times2+0\times2^2+(-4)\times2^3+4\times2^4\Big)\nonumber\\
&&+2^{4\times3\times2^{k-2}}\Big((-4)\times2+0\times2^2+0\times2^3+4\times2^4\Big)\nonumber\\
&&-2^{4\times2^{k}}\Big((-4)\times2+0\times2^2+(-4)\times2^3+4\times2^4\Big)\nonumber\\
&&+2^{4\times2^{k}}\Big((-4)\times2+0\times2^2+(-4)\times2^3+(-4)\times2^4\Big)\nonumber\\
&&=4\times6\times\frac{6(2^{4(2^k+1)}-1)}{2^4-1}+4\times2^{2^k+1}+4\times2^{3\times(2^k+1)}-8\times2^{4\times(2^k+1)}\nonumber\\
&&=8\Big\{\frac{3(2^{4(2^k+1)}-1)}{2^4-1}+2^{2^k}+2^{3\times2^k+2}-2^{4\times(2^k+1)}\Big\}.\label{key-2}
\end{eqnarray}
Bringing (\ref{key-2}) into (\ref{Key-equation}) and simplifying it, we get
\begin{eqnarray}
S(2)T(2^{-1})&\equiv&-4\Bigg\{\frac{2^{4(2^{2k}-1)}-1}{2^{4(2^k+1)}-1}\Big[\frac{2^{4(2^k+1)}-1}{5}+2^{2^k}+2^{3\times2^k+2}\nonumber\\
&&\ \ \ \ \ \ \ -2^{4\times(2^k+1)}\Big]+2^{2k-1}\Bigg\}\ \Big(\mathrm{mod}\ 2^{4(2^{2k}-1)}-1\Big).\label{Key-3}
\end{eqnarray}
On one hand, it is obvious that $2^4-1=3\times5$ is a factor of $2^{4(2^{2k}-1)}-1$, on the other hand, we can derive
\begin{eqnarray}
S(2)T(2^{-1})&\equiv&\left\{
\begin{array}{llll}
13\ (\mathrm{mod}\ 15),\ \ \ \mathrm{if}\ k\equiv0\ (\mathrm{mod}\ 4),\\
0\ \ \ (\mathrm{mod}\ 15),\ \ \ \mathrm{if}\ k\equiv1\ (\mathrm{mod}\ 4),\\
10\ (\mathrm{mod}\ 15),\ \ \ \mathrm{if}\ k\equiv2\ (\mathrm{mod}\ 4),\\
9\ \ \ (\mathrm{mod}\ 15),\ \ \ \mathrm{if}\ k\equiv3\ (\mathrm{mod}\ 4)
\end{array}
\right.
\label{Key-7}
\end{eqnarray}
from (\ref{Key-3}) by direct calculation. This tells us that
\begin{eqnarray}
\mathrm{gcd}\Big(S(2)T(2^{-1}),2^{4(2^{2k}-1)}-1\Big)>1\ \ \mathrm{for}\ k\equiv1,2,3\ (\mathrm{mod}\ 4).
\end{eqnarray}
In order to obtain a more exact lower bound on the 2-adic complexity of Yu-Gong sequence, we need to give some more detailed computation.
Again from (\ref{Key-3}), we get
\begin{eqnarray}
S(2)T(2^{-1})&\equiv&\textcolor{blue}{-2^{2k+1}}\ \Bigg(\mathrm{mod}\ \frac{2^{4(2^{2k}-1)}-1}{2^{4(2^k+1)}-1}\Bigg),\label{Key-4}\\
S(2)T(2^{-1})&\equiv&-4\Bigg\{(2^k-1)\Big[\frac{2^{4(2^k+1)}-1}{5}+2^{2^k}+2^{3\times2^k+2}-1\Big]\nonumber\\
&&\ \ \ \ \ \ \ \ +\textcolor{blue}{2^{2k-1}}\Bigg\}\ \Big(\mathrm{mod}\ 2^{4(2^k+1)}-1\Big),\label{Key-5}
\end{eqnarray}
where (\ref{Key-5}) comes from the following congruence
$$\frac{2^{4(2^{2k}-1)}-1}{2^{4(2^k+1)}-1}=\frac{2^{4(2^k+1)(2^k-1)}-1}{2^{4(2^k+1)}-1}\equiv\ 2^k-1\ \Big(\mathrm{mod}\ 2^{4(2^k+1)}-1\Big).$$
Furthermore, since $2^{4(2^k+1)}-1=5\times\frac{2^{2(2^k+1)}+1}{5}\times(2^{2^k+1}+1)\times(2^{2^k+1}-1)$, then by (\ref{Key-5}) we have
\begin{eqnarray}
S(2)T(2^{-1})&\equiv&-4(2^{2k-1}-2^k+1)\ \Big(\mathrm{mod}\ \frac{2^{2(2^k+1)}+1}{5}\Big),\label{Key-8}\\
S(2)T(2^{-1})&\equiv&\textcolor{blue}{-8(2^{k-1}-1)^{2}}\ \Big(\mathrm{mod}\ 2^{2^k+1}+1\Big),\label{Key-9}\\
S(2)T(2^{-1})&\equiv&-2^{2k+1}\ \Big(\mathrm{mod}\ 2^{2^k+1}-1\Big).\label{Key-10}
\end{eqnarray}
Combining the results in Lemma \ref{number-theory-1}, the proof is finished.
\end{proof}
}
\textcolor{blue}{
\begin{example}
To ensure the correctness of our main result, at the same time, in order to compare the actual values with the lower bounds of the 2-adic complexity of Yu-Gong sequences obtained in this paper, we have done the following verification work by combining Matlab and Mathematica programs:
\begin{itemize}
\item[(1)] For $k=1,2,3,4,5,6,7,8$, the correctness of the congruences (\ref{Key-8})-(\ref{Key-10}) have been verified using the direct definitions of Yu-Gong sequences and the mathematical expression $S(2)T(2^{-1})$.
\item[(2)] For $k=1,2,3,4,5,6,7,8$, the actual values of the 2-adic complexity of Yu-Gong sequences have been determined by determining the corresponding $\mathrm{gcd}(S(2),2^N-1)$ in the definition of 2-adic complexity of binary sequences. And we list a table to compare the actual values and the lower bounds of the 2-adic complexity of Yu-Gong sequences (Please see Table 4).
\item[(3)] From the process of determining the lower bound of the 2-adic complexity of Yu-Gong sequence, the value of $\mathrm{gcd}(2^{2k-1}-2^k+1, \frac{2^{2(2^k+1)}+1}{5})$ is a key factor, which especially affects the cases of the maximal values of the 2-adic complexity. In Lemma \ref{number-theory-1}, we proved $\mathrm{gcd}(2^{2k-1}-2^k+1, \frac{2^{2(2^k+1)}+1}{5})=1$ only when $k\equiv0\ \mathrm{mod}\ 4$ and $2^{2k-1}-2^k+1$ is a prime. In fact, we find
    \begin{eqnarray}
    \mathrm{gcd}\Big(2^{2k-1}-2^k+1, \frac{2^{2(2^k+1)}+1}{5}\Big)=1\label{conjecture}
   \end{eqnarray}
     for each $k\in\{4,8,12,16,20,24,28,32\}$ by Mathematica programs and we can not determine it for the cases of $k\equiv0\ \mathrm{mod}\ 4$ and $k\geq36$ because of the limitation of computer performance. So we guess that (\ref{conjecture}) maybe hold for all $k\equiv0\ (\mathrm{mod}\ 4)$. But it is difficult for us within our capabilities to prove it now. We also sincerely invite interested readers to complete it.
\end{itemize}
\end{example}
}
\begin{table}
\centering
{\tiny
\bf{\textcolor{blue}{Table 4: A comparison between the actual values and\\the lower bounds of the 2-adic complexity of Yu-Gong sequences}}
\begin{tabular}{|c|c|c|c|}
\hline The value of $k$ & \tabincell{l} {The period $N$ of\\ Yu-Gong sequence} & \tabincell{l} {The actual value of\\ the 2-adic complexity\\ of Yu-Gong sequence} & \tabincell{l} {The lower bound of\\ the 2-adic complexity\\ of Yu-Gong sequence\\ obtained in this paper}\\
\hline 1 & 12 & 8 & 6\\
\hline 2 & 60 & 60& 55\\
\hline 3 & 252 & 250& 240\\
\hline 4 & 1020 & 1020& 1020\\
\hline 5 & 4092 & 4082 & 4072\\
\hline 6 & 16380 & 16380 & 16367\\
\hline 7 & 65532 & 65530 & 65504\\
\hline 8 & 262140 & 262140 & 262123\\
\hline
\end{tabular}
}
\end{table}

\end{document}